\pgfplotsset{compat=1.14}
\pgfplotsset{every tick label/.append style={font=\footnotesize}}
\newcolumntype{R}{>{\raggedleft\arraybackslash}X}
\newcolumntype{L}{>{\raggedright\arraybackslash}X}
\newcolumntype{C}{>{\centering\arraybackslash}X}
\newcolumntype{M}[1]{>{\centering\arraybackslash}m{#1}}
\newcolumntype{K}{>{\columncolor{gray!20}}C}
\newcolumntype{k}{>{\columncolor{gray!20}}c}
\newlength{\tablen}
\newcolumntype{.}{D{.}{.}{-1}}
\renewcommand\p@subfigure{\arabic{figure}.}
\renewcommand\p@subtable{\arabic{table}.}
\setlist[itemize]{leftmargin=2.5\parindent}
\setlist[enumerate]{leftmargin=2.5\parindent}
\newenvironment{customlegend}[1][]{%
	\begingroup
	\csname pgfplots@init@cleared@structures\endcsname
	\pgfplotsset{#1}%
    }{%
	\csname pgfplots@createlegend\endcsname
	\endgroup
    }%
\def\addlegendimage{\csname pgfplots@addlegendimage\endcsname}
\theoremstyle{plain}
\newtheorem{proposition}{Proposition}
\theoremstyle{definition}
\newtheorem{definition}{Definition}
\theoremstyle{remark}
\def\keywords{\vspace{.5em} 
{\noindent \textit{Keywords}: }}
\def\JEL{\vspace{.5em} 
{\noindent \textbf{\emph{JEL} classification number}: }}
\def\AMS{\vspace{.5em} 
{\noindent \textbf{\emph{MSC} class}: }}
\author{\href{https://sites.google.com/view/laszlocsato}{L\'aszl\'o Csat\'o}\thanks{~E-mail: \emph{laszlo.csato@sztaki.hu}} }
\affil{Institute for Computer Science and Control (SZTAKI) \\
Laboratory on Engineering and Management Intelligence, Research Group of Operations Research and Decision Systems}
\affil{Corvinus University of Budapest (BCE) \\
Department of Operations Research and Actuarial Sciences}
\affil{Budapest, Hungary}
\title{A comparison of penalty shootout designs in soccer}
\date{\today}
\def\Dedication{ 
{\noindent $\mathfrak{Alles}$ $\mathfrak{erscheint}$ $\mathfrak{so}$ $\mathfrak{einfach}$, $\mathfrak{alle}$ $\mathfrak{erforderlichen}$ $\mathfrak{Kenntnisse}$ $\mathfrak{erscheinen}$ $\mathfrak{so}$ $\mathfrak{flach}$, $\mathfrak{alle}$ \linebreak $\mathfrak{Kombinationen}$ $\mathfrak{so}$ $\mathfrak{unbedeutend}$, $\mathfrak{daß}$ $\mathfrak{in}$ $\mathfrak{Vergleichung}$ $\mathfrak{damit}$ $\mathfrak{uns}$ $\mathfrak{die}$ $\mathfrak{einfachste}$ $\mathfrak{Aufgabe}$ $\mathfrak{der}$ $\mathfrak{h\ddot{o}heren}$ $\mathfrak{Mathematik}$ $\mathfrak{mit}$ $\mathfrak{einer}$ $\mathfrak{gewissen}$ $\mathfrak{wissenschaftlichen}$ $\mathfrak{W\ddot{u}rde}$ $\mathfrak{imponiert}$.}\footnote{~``\emph{All appears so simple, all the requisite branches of knowledge appear so plain, all the combinations so unimportant, that, in comparison with them, the easiest problem in higher mathematics impresses us with a certain scientific dignity.}'' (Source: Carl von Clausewitz: \emph{On War}, Book 1, Chapter 7 -- Friction in War, translated by Colonel James John Graham, London, N. Tr\"ubner, 1873. \url{http://clausewitz.com/readings/OnWar1873/TOC.htm})}
\vspace{0.25cm}

\flushright
\noindent (Carl von Clausewitz: \emph{Vom Kriege})

\vspace{1cm} 
\justify }
\begin{document}

\maketitle

\Dedication

\begin{abstract}
\noindent
Penalty shootout in soccer is recognized to be unfair because the team kicking first in all rounds enjoys a significant advantage.
The so-called Catch-Up Rule has been suggested recently to solve this problem but is shown here not to be fairer than the simpler deterministic Alternating (ABBA) Rule that has already been tried. We introduce the Adjusted Catch-Up Rule by guaranteeing the first penalty of the possible sudden death stage to the team disadvantaged in the first round. It outperforms the Catch-Up and Alternating Rules, while remains straightforward to implement.
A general measure of complexity for penalty shootout mechanisms is also provided as the minimal number of binary questions required to decide the first-mover in a given round without knowing the history of the penalty shootout. This quantification permits a two-dimensional evaluation of any mechanism proposed in the future.

\keywords{sports rules; soccer; penalty shootout; mechanism design; fairness}

\AMS{60J20, 91A05, 91A80}

\JEL{C44, C72, Z20}
\end{abstract}

\section{Introduction} \label{Sec1}

Fairness has several interpretations in sports, one basic desideratum being the interpretation of the Aristotelian Justice principle: higher-ability competitors should win with a higher probability alongside the equal treatment of equals.
In particular, we address the problem of penalty shootouts in soccer (association football) from this point of view.

According to the current rulebook of soccer, Laws of the Game 2019/20, ``\emph{when competition rules require a winning team after a drawn match or home-and-away tie, the only permitted procedures to determine the winning team are:
a) away goals rule;
b) two equal periods of extra time not exceeding 15 minutes each;
c) kicks from the penalty mark}'' \citep[Section~10]{IFAB2019}.
In the ultimate case of item \emph{c)}, a coin is tossed to decide the goal at which the kicks will be taken. Then the referee tosses a coin again, the winner decides whether to take the first or second kick, and five kicks are taken alternately by both teams (if, before both teams have taken five kicks, one has scored more goals than the other could score, even if it were to complete its five kicks, no more kicks are taken).
If the scores are still level after five rounds, the kicks continue in the \emph{sudden death} stage until one team scores a goal more than the other from the same number of kicks. Following \citet{BramsIsmail2018}, we will refer to this rule as the \emph{Standard ($ABAB$) Rule}.

Since most penalties are successful in soccer, the player taking the second kick is usually under greater mental pressure, especially from the third or fourth penalties onward, when a miss probably means the loss of the match. Consequently, the team kicking first in a penalty shootout is recognized to win significantly more frequently than 50 percent of the time \citep{ApesteguiaPalacios-Huerta2010, Palacios-Huerta2014, DaSilvaMioranzaMatsushita2018, RudiOlivaresShatty2019}, indicating the unfairness of the Standard ($ABAB$) Rule.

Therefore, three alternative mechanisms for penalty shootouts will be considered here:
\begin{itemize}
\item
\emph{Alternating ($ABBA$) Rule}: the order of the first two penalties ($AB$) is mirrored in the next two ($BA$), and this sequence continues even in the possible sudden death stage of the shootout (the sixth round of penalties is started by team $B$, the seventh by team $A$, and so on).
\item
\emph{Catch-Up Rule} \citep{BramsIsmail2018}: the order of the penalties in a given round, including the sudden death, is the mirror image of the previous round except if the first team failed and the second scored in the previous round when the order of the teams remains unchanged.
\item
\emph{Adjusted Catch-Up Rule}: the first five rounds of penalties, started by team $A$, are kicked according to the Catch-Up Rule, but team $B$ is guaranteed to be the first kicker in the sudden death stage (sixth round).
\end{itemize}
Note that the Adjusted Catch-Up Rule combines the other two mechanisms: it coincides with the Catch-Up Rule in the first five rounds and with the Alternating ($ABBA$) Rule in the sudden death stage.

The three designs will be compared concerning not only fairness but also simplicity because achieving fairness has a price in increasing complexity.

The contribution of our research resides in the following points:
\begin{enumerate}
\item
We find that the Catch-Up Rule, promoted by \citet{BramsIsmail2018}, does not outperform the simpler and already tried Alternating ($ABBA$) Rule under the assumptions of the same authors, which substantially reduces the importance of a central result of \citet{BramsIsmail2018};
\item
We show that the proposed Adjusted Catch-Up Rule is fairer than both alternative penalty shootout designs;
\item
We suggest the first general complexity measure of penalty shootout mechanisms in the literature that remains consistent with the view of the decision-makers. It is based on the minimal number of binary questions required to decide the first-mover in a given round of the penalty shootout without knowing its history.
\end{enumerate}

The paper is organized as follows.
Section~\ref{Sec2} discusses the problem of penalty shootouts, while Section~\ref{Sec3} analyzes the fairness of the three penalty shootout designs presented above.
Quantification of the complexity of an arbitrary rule is provided in Section~\ref{Sec4}.
Section~\ref{Sec5} offers some concluding thoughts.

\section{An overview of penalty shootouts} \label{Sec2}

Soccer is typically a game with a low number of scores, hence ties, even the result of 0-0, are relatively common. Since in knockout (elimination) tournaments only one team advances to the next round, these ties should be broken.

Before 1970, soccer matches that were tied after extra time were either decided by a coin toss or replayed. However, events in the 1968 European football championship led FIFA, the international governing body of association football, to try penalty shootouts \citep{AnbarciSunUnver2019}.\footnote{~The tied semifinal (after extra time) between Italy and the Soviet Union was decided by a coin toss for Italy. The final between Italy and Yugoslavia ended in a draw of 1-1 even after 30 minutes extra time, thus it was replayed two days later.}
In the following decades, penalty shootout has become the standard tie-breaking procedure in knockout tournaments.

Besides, penalty shootout may be a specific tie-breaking rule in round-robin tournaments.
For example, in the group stage of the \href{https://en.wikipedia.org/wiki/UEFA_Euro_2020}{2020 UEFA European Football Championship}, if two teams, which have the same number of points and the same number of goals scored and conceded, play their last group match against each other and are still equal at the end of that match, their final rankings are determined by kicks from the penalty mark, provided that no other teams within the group have the same number of points on completion of all group matches \citep[Article~20.02]{UEFA2018c}.\footnote{~Despite the restrictive conditions, it remains not only a theoretical possibility: in the \href{https://en.wikipedia.org/wiki/2016_UEFA_European_Under-17_Championship_qualification\#Elite_round}{elite round of 2016 UEFA European Under-17 Championship qualification}, Poland overtook Ireland in Group 7, and Belgium was ranked higher than Spain in Group 8 due to this particular rule.}

In the \href{https://en.wikipedia.org/wiki/1988\%E2\%80\%9389_in_Argentine_football}{1988-89 season of the Argentinian League}, all drawn matches went to penalties without extra time, when the winner of the shootout obtained two points, and the loser one point \citep[Section~10]{Palacios-Huerta2014}. A similar rule was applied in the \href{https://en.wikipedia.org/wiki/1994\%E2\%80\%9395_National_Soccer_League}{1994-95 Australian National Soccer League}, except that a regular win was awarded by four points \citep[Section~3.9.7]{KendallLenten2017}.

\subsection{On the fairness of penalty shootouts} \label{Sec21}

Penalty shootouts have inspired many academic researchers to investigate the issue of fairness as they offer excellent natural experiments despite a substantial rule change in their implementation: before June 2003, the team that won the random coin toss had to take the first kick, and after July 2003, the winner of the coin toss can choose the order of kicking.
The stakeholders also feel potential problems, more than 90\% of coaches and players asked in a survey want to go first, mainly because they attempt to put psychological pressure on the other team \citep{ApesteguiaPalacios-Huerta2010}.\footnote{~An interesting exception was a quarterfinal of the 2018 FIFA World Cup when the Croatian team captain \emph{Luka Modri\'c} chose to kick the second penalties despite winning the coin toss against Russia \citep{Mirror2018}.}

Denote the two teams by $A$ and $B$, where $A$ is the first kicker.
According to \citet{ApesteguiaPalacios-Huerta2010}, team $A$ wins with the probability of 60.5\% based on 129 pre-2003 penalty shootouts and with the probability of 59.2\% based on 269 shootouts that include post-2003 cases. The advantage of the first-mover is statistically significant.
However, using a superset of their pre-2003 sample with 540 shootouts, \citet{KocherLenzSutter2012} report this value to be only 53.3\% and insignificant.
\citet{Palacios-Huerta2014} further expands the database to 1001 penalty shootouts played before 2012 to get a 60.6\% winning probability for the first team.
\citet{VandebroekMcCannVroom2018} explain this disagreement with insufficient sample sizes, they find that even a relatively small but meaningful lagging-behind effect (the team having less score succeeds with only a 70\% probability instead of 75\%) cannot be reliably identified if only 500 penalty shootouts are considered.

\citet{DaSilvaMioranzaMatsushita2018} collect 232 penalty shootout situations and get a 59.48\% winning probability for team $A$, which is statistically significant. On the other hand, \citet{ArrondelDuhautoisLaslier2019} show no advantage based on 252 French penalty shootouts. However, their results reveal that the probability of scoring is negatively affected by the stake (the impact of my scoring on the expected probability that my team will eventually win) and the difficulty of the situation (the ex-ante probability of my team eventually losing).
Finally, \citet{RudiOlivaresShatty2019} investigate 1635 penalty shootouts, which leads to a statistically significant 54.86\% winning probability for team $A$. Although this is closer in magnitude to the value presented by \citet{KocherLenzSutter2012} than to the findings of \citet{ApesteguiaPalacios-Huerta2010} and \citet{Palacios-Huerta2014}, the larger sample size enables a more precise estimation and higher statistical power to detect the possible advantage.

Similar problems may arise in other sports such as handball, ice hockey, or water polo \citep{AnbarciSunUnver2019}.
\citet{Cohen-ZadaKrumerShapir2018} and \citet{DaSilvaMioranzaMatsushita2018} find that the $ABBA$ pattern does not favor the player who serves first in a tennis tiebreak.
According to \citet{Gonzalez-DiazPalacios-Huerta2016}, the player drawing the white pieces in the odd games of a multi-stage chess contest has about 60\% chance to win the match. Therefore, since the \href{https://en.wikipedia.org/wiki/World_Chess_Championship_2006}{World Chess Championship 2006}, the colors are reversed halfway through in the match containing twelve scheduled games as one player plays with the white pieces in the 1st, 3rd, 5th, 8th, 10th, 12th games according to the $ABABAB|BABABA$ sequence.

To summarize, while the empirical evidence remains somewhat controversial, it seems probable that the team kicking the first penalty enjoys an advantage, which is widely regarded as unfair.
This fact is also recognized by the IFAB (International Football Association Board), the rule making body of soccer: Laws of the Game 2017/18 explicitly says in its section discussing the future that the IFAB will consult widely on a number of important Law-related topics, including ``\emph{a potentially fairer system of taking kicks from the penalty mark}'' \citep{IFAB2017}.\footnote{~This sentence appears in the same place in Laws of the Game 2018/19 \citep{IFAB2018}, but is missing from Laws of the Game 2019/20 \citep{IFAB2019}.}

\subsection{Alternative mechanisms for penalty shootouts} \label{Sec22}

The IFAB has decided to test the \emph{Alternating ($ABBA$) Rule}.
The trial was initially scheduled at the \href{https://en.wikipedia.org/wiki/2017_UEFA_European_Under-17_Championship}{2017 UEFA European Under-17 Championship} and the \href{https://en.wikipedia.org/wiki/2017_UEFA_Women\%27s_Under-17_Championship}{2017 UEFA Women's Under-17 Championship}, organized in May 2017 \citep{UEFA2017e}, and was extended to the \href{https://en.wikipedia.org/wiki/2017_UEFA_European_Under-19_Championship}{2017 UEFA European Under-19 Championship} and the \href{https://en.wikipedia.org/wiki/2017_UEFA_Women's_Under-19_Championship}{2017 UEFA Women's Under-19 Championship} in the following month \citep{UEFA2017f}. The first implementation of the new system was a penalty shootout between Germany and Norway in the Women's Under-17 Championship semifinal on 11 May 2017 \citep{ThomsonReuters2017}.

This mechanism was applied in the \href{https://en.wikipedia.org/wiki/2017_FA_Community_Shield}{2017 FA Community Shield}, too, where Arsenal, the winner of the 2017 FA Cup Final, won after an $ABBA$ penalty shootout against Chelsea, the champions of the 2016/17 Premier League.
There was even a controversy in the Dutch KNVB Cup in 2017 when a referee erroneously employed the Alternating ($ABBA$) rule during a penalty shootout, hence it should be replayed three weeks after \citep{Mirror2017}.

However, the 133rd Annual Business Meeting (ABM) of the IFAB agreed that the Alternating ($ABBA$) rule will no longer be a future option for competitions due to ``\emph{the absence of strong support, mainly because the procedure is complex}'' \citep{FIFA2018b}.

Academic researchers have proposed some further rules to increase fairness \citep{AnbarciSunUnver2019, BramsIsmail2018, Echenique2017, Palacios-Huerta2012}.
Our point of departure is the \emph{Catch-Up Rule} \citep{BramsIsmail2018}, which takes into account the results of penalties in the preceding round to allow the team performing worse to catch up.
Assume that team $A$ kicks first in a particular round, thus it is advantaged. In the next round, team $B$ will kick first except if $A$ fails and $B$ succeeds.

We suggest a slight improvement in this mechanism. Note that the penalty shootout is essentially composed of two parts, the first five rounds, and the possible sudden death stage. Therefore, it makes sense to balance the advantage of the first-mover by making it disadvantaged at the beginning of the sudden death. Formally, if team $A$ starts the shootout, then team $B$ will kick first in the sixth round, provided that it is reached. Under the original Catch-Up Rule, it is possible that $A$ kicks first in the sixth round, for instance, when it leads by 4-3 after four rounds, but $A$ fails and $B$ succeeds in the fifth round of penalty kicks.
This variant of the Catch-Up Rule, which \emph{a priori} fixes the first-mover in the sudden death, is called the \emph{Adjusted Catch-Up Rule}.



\begin{table}[ht!]
\centering
\caption{An example of penalty shootout rules}
\label{Table1}
	\begin{tabularx}{\textwidth}{l KKCC KKCC} \toprule 
    Rule & \multicolumn{2}{k}{$ABAB$} & \multicolumn{2}{c}{$ABBA$} & \multicolumn{2}{k}{Catch-Up} & \multicolumn{2}{c}{Adj. Catch-Up} \\ \midrule
    Team & \textcolor{red}{Red} & \textcolor{blue}{Blue} & \textcolor{red}{Red} & \textcolor{blue}{Blue} & \textcolor{red}{Red} & \textcolor{blue}{Blue} & \textcolor{red}{Red} & \textcolor{blue}{Blue} \\ \midrule \midrule 
    1st kick & \textcolor{red}{\ding{52}}     &       & \textcolor{red}{\ding{52}}     &       & \textcolor{red}{\ding{52}}     &       & \textcolor{red}{\ding{52}}     &  \\
    2nd  &       & \textcolor{blue}{\ding{52}}     &       & \textcolor{blue}{\ding{52}}     &       & \textcolor{blue}{\ding{52}}     &       & \textcolor{blue}{\ding{52}} \\ \midrule
    3rd  & \textcolor{red}{\ding{55}}     &       &     & \textcolor{blue}{\ding{55}} &      & \textcolor{blue}{\ding{55}}  &     & \textcolor{blue}{\ding{55}} \\
    4th  &       & \textcolor{blue}{\ding{55}}     & \textcolor{red}{\ding{55}} &      & \textcolor{red}{\ding{55}}   &    & \textcolor{red}{\ding{55}}   &  \\ \midrule
    5th  & \textcolor{red}{\ding{52}}     &       & \textcolor{red}{\ding{52}}     &       & \textcolor{red}{\ding{52}}     &       & \textcolor{red}{\ding{52}}     &  \\
    6th  &       & \textcolor{blue}{\ding{52}}     &       & \textcolor{blue}{\ding{52}}     &       & \textcolor{blue}{\ding{52}}     &       & \textcolor{blue}{\ding{52}} \\ \midrule
    7th  & \textcolor{red}{\ding{52}}     &       &       & \textcolor{blue}{\ding{55}}      &       & \textcolor{blue}{\ding{55}}      &       & \textcolor{blue}{\ding{55}} \\
    8th  &       & \textcolor{blue}{\ding{55}}      & \textcolor{red}{\ding{52}}     &       & \textcolor{red}{\ding{52}}     &       & \textcolor{red}{\ding{52}}     &  \\ \midrule
    9th  & \textcolor{red}{\ding{55}}      &       & \textcolor{red}{\ding{55}}      &       &       & \textcolor{blue}{\ding{52}}     &       & \textcolor{blue}{\ding{52}} \\
    10th &       & \textcolor{blue}{\ding{52}}   &       & \textcolor{blue}{\ding{52}}   & \textcolor{red}{\ding{55}}    &       & \textcolor{red}{\ding{55}}  &  \\ \midrule \midrule
    11th & \textcolor{red}{\ding{52}}     &       &       & \textcolor{blue}{\ding{52}}     & \textcolor{red}{\ding{52}}     &       &       & \textcolor{blue}{\ding{52}} \\
    12th &       & \textcolor{blue}{\ding{52}}     & \textcolor{red}{\ding{52}}     &       &       & \textcolor{blue}{\ding{52}}     & \textcolor{red}{\ding{52}}     &  \\ \midrule
    13th & \textcolor{red}{\ding{52}}     &       & \textcolor{red}{\ding{52}}      &     &     & \textcolor{blue}{\ding{55}}      & \textcolor{red}{\ding{52}}      &  \\
    14th &       & \textcolor{blue}{\ding{55}}   &     & \textcolor{blue}{\ding{55}}      & \textcolor{red}{\ding{52}}     &     &      & \textcolor{blue}{\ding{55}} \\ \bottomrule
	\end{tabularx}
\end{table}

Table~\ref{Table1} illustrates how the four rules work. The Red team is the first kicker, \ding{52} means a successful, and \ding{55} indicates an unsuccessful penalty. Since the result after five rounds is 3-3, the sudden death stage starts: the Red team kicks first in the sixth round according to the Catch-Up Rule as the Blue team was the first-mover in the previous round, but the Blue team kicks first in the sixth round when the Adjusted Catch-Up Rule is used because it was disadvantaged in the first round.

\section{The analysis of three penalty shootout designs} \label{Sec3}

Following the literature on penalty shootouts, fairness means that no team enjoys an advantage because of winning or losing the coin toss.

\begin{definition} \label{Def31}
\emph{Fairness}:
A penalty shootout mechanism is \emph{fair} if the probability of winning does not depend on the outcome of the coin toss.
\end{definition}

Consequently, in our mathematical model, a mechanism is called \emph{fairer} than another if the probability of winning the match conditional on winning the coin toss is closer to $0.5$ for equally skilled teams.

The standard $ABAB$ rule will not be discussed here because it has already been investigated in \citet{BramsIsmail2018} --- and is markedly unfair.

\subsection{Fairness: a simple model which solely depends on the order} \label{Sec31}

\begin{table}[ht!]
\centering
\caption{Penalty shootout success rates per round}
\label{Table2}
\rowcolors{3}{gray!20}{}
\begin{threeparttable}
    \begin{tabularx}{0.6\textwidth}{lCC} \toprule \hiderowcolors
          & First kicker & Second kicker \\ \midrule \showrowcolors
    Round 1 & 0.79  & 0.72 \\
    Round 2 & 0.82  & 0.77 \\
    Round 3 & 0.77  & 0.64 \\
    Round 4 & 0.74  & 0.68 \\
    Round 5 & 0.74  & 0.67 \\ \bottomrule
    \end{tabularx}
\vspace{0.25cm}
\begin{tablenotes} \footnotesize
\item
Source: \citet[p.~2558]{ApesteguiaPalacios-Huerta2010}
\end{tablenotes}
\end{threeparttable}
\end{table}

\citet[p.~2558]{ApesteguiaPalacios-Huerta2010} provide empirical probabilities for scoring a penalty on each round, presented in Table~\ref{Table2}.
It can be seen that the team kicking first in a given round always succeeds with a higher probability. Hence, following \citet{BramsIsmail2018}, we use the reasonable assumption that the probability of a successful penalty depends only on whether the team kicks first or second in a round: the advantaged team has a probability $p$ of scoring, and the disadvantaged team has a probability $q (\leq p)$ of scoring. Similarly to \citet{BramsIsmail2018}, our baseline choice is $p=3/4$ and $q=2/3$, which are close to the empirical success rates given in Table~\ref{Table2}, especially in the last three rounds. This corresponds to about a 60\% chance of winning for the first-mover as observed in practice by \citet{ApesteguiaPalacios-Huerta2010} and \citet{Palacios-Huerta2014}.

To illustrate the model, \citet{BramsIsmail2018} analyze the Catch-Up Rule for a penalty shootout over only two rounds and derive that $p=3/4$ and $q=2/3$ result in:
\begin{itemize}
\item
the probability of team $A$ winning is $P^2(A) = 41/144 \approx 0.285$;\footnote{~\citet[p.~188]{BramsIsmail2018} contains a rounding error. Superscript $2$ indicates that the probability concerns a penalty shootout over two rounds.}
\item
the probability of team $B$ winning is $P^2(B) = 39/144 \approx 0.270$;
\item
the probability of a tie is $P^2(T) = 64/144 \approx 0.444$.
\end{itemize}

If there is a tie after two rounds, the shootout goes to sudden death. Assume that team $A$ kicks first and let $W(A)$ be the probability of winning for team $A$ in the sudden death stage. The Catch-Up, Adjusted Catch-Up, and Alternating ($ABBA$) Rules coincide in this stage, the calculations of \citet{BramsIsmail2018} remain valid, that is,
\begin{equation} \label{eq1}
W(A) = \frac{1-q+pq}{2-p-q+2pq}.
\end{equation}
For $p=3/4$ and $q=2/3$, one gets $W(A) = 10/19 \approx 0.526$.

If the penalty shootout is played over two rounds before sudden death, the probability of a tie is $P^2(T) = 64/144$.
Under the Catch-Up Rule, $A$ kicks first in the third round with a probability of $58/144 \approx 0.403$, while $B$ kicks first in the third round with a probability of $6/144 \approx 0.042$ because team $B$ will kick first only in the case of the following sequence: $A$ fails, $B$ scores, $A$ scores, $B$ fails, which has a probability of $(1-p)qp(1-q)$. Consequently, the probability that team $A$ wins is
\[
Q^2(A) = P^2(A) + \frac{58}{144} \times \frac{10}{19} + \frac{6}{144} \times \frac{9}{19} = \frac{1413}{2736} \approx 0.516.
\]

On the other hand, the Adjusted Catch-Up Rule guarantees the first penalty in the sudden death for team $B$, hence the probability that team $A$ wins under this mechanism is
\[
Q^2(A) = P^2(A) + \left( \frac{58}{144} + \frac{6}{144} \right) \times \frac{9}{19} = \frac{1355}{2736} \approx 0.495.
\]

A more detailed discussion of the Alternating ($ABBA$) Rule is provided because it is missing from \citet{BramsIsmail2018} but can contribute to a better understanding of the model.
There are three ways for team $A$ to win a penalty shootout over two rounds:
\begin{enumerate}[label=\Roman*)]
\item
2-0: \emph{$A$ scores on both rounds while $B$ fails to score on both rounds} \\
On the first round, $A$ succeeds and $B$ fails with probability $p(1-q)$.
On the second round, $B$ kicks first and fails, while $A$ kicks second and succeeds with probability $(1-p)q$.
The joint probability of this outcome over the two rounds is $p(1-q) (1-p)q$.
\item
2-1: \emph{$A$ scores on both rounds while $B$ fails to score on one of these rounds} \\
There are two subcases:
\begin{itemize}
\item
\emph{$B$ scores on the first round} \\
On this round, both teams succeed with probability $pq$.
On the second round, $B$ kicks first and fails, while $A$ kicks second and scores with probability $(1-p)q$.
The joint probability over both rounds is $pq (1-p)q$.
\item
\emph{$B$ scores on the second round} \\
On the first round, $A$ succeeds and $B$ fails with probability $p(1-q)$.
On the second round, $B$ kicks first and succeeds, after which $A$ also scores, with probability $pq$. The joint probability over both rounds is $p(1-q) pq$.
\end{itemize}
Hence the probability of the outcome 2-1 is
\[
pq (1-p)q + p(1-q) pq.
\]
\item
1-0: \emph{$A$ scores on one round while $B$ fails to score on both rounds} \\
There are two subcases:
\begin{itemize}
\item
\emph{$A$ scores on the first round} \\
On this round, $A$ succeeds and $B$ fails with probability $p(1-q)$.
On the second round, both teams fail with probability $(1-p)(1-q)$.
The joint probability over the two rounds is $p(1-q) (1-p)(1-q)$.
\item
\emph{$A$ scores on the second round} \\
On the first round, both teams fail with probability $(1-p)(1-q)$.
On the second round, $B$ kicks first and fails, after which $A$ succeeds, with probability $(1-p)q$. The joint probability over the two rounds is $(1-p)(1-q) (1-p)q$.
\end{itemize}
Thus the probability of the outcome 1-0 is
\[
p(1-q) (1-p)(1-q) + (1-p)(1-q) (1-p)q.
\]
\end{enumerate}
The assumption $p=3/4$ and $q=2/3$ implies that:
\begin{itemize}
\item
the probability of team $A$ winning is $P^2(A) = 41/144 \approx 0.285$;
\item
the probability of team $B$ winning is $P^2(B) = 41/144 \approx 0.285$;
\item
the probability of a tie is $P^2(T) = 62/144 \approx 0.431$.
\end{itemize}
Unsurprisingly, this rule leads to equal winning probabilities for the two teams over two rounds as two is an even number.

The Alternating ($ABBA$) Rule provides the first penalty in the sudden death for team $A$ because it is the third round, hence the probability that team $A$ wins is
\[
Q^2(A) = P^2(A) + \frac{62}{144} \times \frac{10}{19} = \frac{1399}{2736} \approx 0.511.
\]

To summarize, while all three alternative designs tend to equalize the winning probabilities compared to the Standard ($ABAB$) Rule, the Adjusted Catch-Up Rule seems to be the closest to fairness. In particular, the Catch-Up and Alternating ($ABBA$) Rules give $100 \times (0.516 / 0.484 - 1) \approx 6.8$\%  and $4.64$\% advantage for the team kicking the first penalty, respectively, while the Adjusted Catch-Up Rule results in an advantage of only $1.92$\% for \emph{the other team} in a penalty shootout over two rounds with sudden death.

\begin{table}[ht!]
\centering
\caption{The probability that $A$ wins including sudden death ($p = 3/4$ and $q = 2/3$)}
\label{Table3}
\rowcolors{3}{gray!20}{}
    \begin{tabularx}{\textwidth}{Lccc} \toprule \hiderowcolors
          & Catch-Up Rule & Adjusted Catch-Up Rule & Alternating ($ABBA$) Rule \\ \midrule \showrowcolors
    1 Round  & 0.526 & 0.526 & 0.526 \\
    2 Rounds & 0.516 & 0.495 & 0.511 \\
    3 Rounds & 0.518 & 0.515 & 0.519 \\
    4 Rounds & 0.513 & 0.501 & 0.508 \\
    \textbf{5 Rounds} & \textbf{0.514} & \textbf{0.509} & \textbf{0.515} \\
    6 Rounds & 0.512 & 0.504 & 0.507 \\
    7 Rounds & 0.512 & 0.507 & 0.513 \\
    8 Rounds & 0.511 & 0.504 & 0.506 \\ \hline
    \end{tabularx}
\end{table}

The winning probabilities of the advantaged team, which kicks the first penalty, are shown in Table~\ref{Table3} for penalty shootouts lasting eight or fewer predetermined rounds followed by sudden death when $p = 3/4$ and $q = 2/3$. Note that the probabilities for the Catch-Up Rule have already been reported in \citet{BramsIsmail2018} up to five rounds.

All three methods, especially the Alternating ($ABBA$) Rule, exhibit a small odd-even effect since their bias is greater for an odd number of predetermined rounds. As expected, they make the contest fairer if the number of rounds increases. The simplest Alternating ($ABBA$) Rule is better than the Catch-Up Rule for an even number of rounds, while the latter has a marginal advantage for an odd number of rounds.

However, the Adjusted Catch-Up Rule consistently outperforms both of them. The smallest imbalance can be observed for a penalty shootout played over four rounds, followed by sudden death if the shootout remains unresolved. In this case, the team kicking first has only $0.58$\% more chance to win under the Adjusted Catch-Up Rule.


\begin{figure}[ht!]
\centering
\caption{The probability that $A$ wins a shootout over five rounds including sudden death}
\label{Fig1}

\begin{tikzpicture}
\begin{axis}[
name = axis1,
title = {$p = 0.65$},
xlabel = Value of $q$,
width = 0.5\textwidth,
height = 0.4\textwidth,
legend style = {font=\small,at={(0.2,-0.15)},anchor=north west,legend columns=6},
ymajorgrids = true,
xmin = 0.5,
xmax = 0.65,
ymin = 0.498,
ymax = 0.537,
max space between ticks=50,
]      

\addlegendentry{Catch-Up Rule$\quad$}
\addplot [blue, dotted, thick] coordinates {
(0.5,0.525300273437499)
(0.51,0.523562559213668)
(0.52,0.521833617997753)
(0.53,0.520112973238883)
(0.54,0.51840018512185)
(0.55,0.516694853996055)
(0.56,0.514996623939367)
(0.57,0.51330518645503)
(0.58,0.511620284299782)
(0.59,0.509941715441386)
(0.6,0.50826933714379)
(0.61,0.506603070178177)
(0.62,0.504942903158171)
(0.63,0.503288896997526)
(0.64,0.5016411894886)
(0.65,0.499999999999999)
};
\addlegendentry{Adjusted Catch-Up Rule$\quad$}
\addplot [red] coordinates {
(0.5,0.514751738281249)
(0.51,0.513696697826338)
(0.52,0.512658143775894)
(0.53,0.511634529931487)
(0.54,0.51062434710905)
(0.55,0.509626121726755)
(0.56,0.508638414399471)
(0.57,0.507659818539704)
(0.58,0.506688958964948)
(0.59,0.50572449051135)
(0.6,0.504765096653594)
(0.61,0.503809488130937)
(0.62,0.50285640157929)
(0.63,0.501904598169296)
(0.64,0.500952862250274)
(0.65,0.499999999999999)
};
\addlegendentry{Alternating ($ABBA$) Rule}
\addplot [ForestGreen, loosely dashed, thick] coordinates {
(0.5,0.5251110546875)
(0.51,0.52345432316409)
(0.52,0.521799153507241)
(0.53,0.520144615599161)
(0.54,0.51848978255417)
(0.55,0.516833728906147)
(0.56,0.51517552882792)
(0.57,0.513514254382137)
(0.58,0.511848973803196)
(0.59,0.510178749809797)
(0.6,0.508502637947711)
(0.61,0.506819684962334)
(0.62,0.505128927200628)
(0.63,0.503429389042058)
(0.64,0.501720081358108)
(0.65,0.499999999999999)
};
\draw [ultra thick, dashed] (axis cs:0.5,0.5)  -- (axis cs:0.65,0.5);
\legend{}
\end{axis}

\begin{axis}[
at = {(axis1.south east)},
xshift = 0.1\textwidth,
title = {$p = 0.7$},
xlabel = Value of $q$,
width = 0.5\textwidth,
height = 0.4\textwidth,
legend style = {font=\small,at={(0.2,-0.15)},anchor=north west,legend columns=6},
ymajorgrids = true,
xmin = 0.5,
xmax = 0.7,
ymin = 0.498,
ymax = 0.537,
]      

\addlegendentry{Catch-Up Rule$\quad$}
\addplot [blue, dotted, thick] coordinates {
(0.5,0.534426666666667)
(0.51,0.532639204800457)
(0.52,0.530861196930458)
(0.53,0.529092046249439)
(0.54,0.527331185282417)
(0.55,0.525578079517269)
(0.56,0.523832231256477)
(0.57,0.522093183685991)
(0.58,0.52036052515721)
(0.59,0.518633893678204)
(0.6,0.516912981610389)
(0.61,0.515197540566901)
(0.62,0.513487386509044)
(0.63,0.51178240503723)
(0.64,0.510082556872903)
(0.65,0.508387883528044)
(0.66,0.506698513158861)
(0.67,0.505014666600412)
(0.68,0.503336663578914)
(0.69,0.501664929098599)
(0.7,0.499999999999999)
};
\addlegendentry{Adjusted Catch-Up Rule$\quad$}
\addplot [red] coordinates {
(0.5,0.520995416666667)
(0.51,0.519829491478846)
(0.52,0.518684131373713)
(0.53,0.517557837646061)
(0.54,0.516449142197786)
(0.55,0.515356605919407)
(0.56,0.514278817099576)
(0.57,0.513214389862091)
(0.58,0.512161962629887)
(0.59,0.511120196615517)
(0.6,0.510087774337661)
(0.61,0.509063398163157)
(0.62,0.508045788874117)
(0.63,0.507033684259678)
(0.64,0.506025837731918)
(0.65,0.505021016965544)
(0.66,0.504018002560878)
(0.67,0.503015586729771)
(0.68,0.502012572004001)
(0.69,0.501007769965789)
(0.7,0.499999999999999)
};
\addlegendentry{Alternating ($ABBA$) Rule}
\addplot [ForestGreen, loosely dashed, thick] coordinates {
(0.5,0.534019166666667)
(0.51,0.532348500064522)
(0.52,0.530680527432549)
(0.53,0.52901426747192)
(0.54,0.527348745478119)
(0.55,0.525682991185033)
(0.56,0.52401603667503)
(0.57,0.522346914353848)
(0.58,0.520674654989077)
(0.59,0.518998285811096)
(0.6,0.517316828675324)
(0.61,0.515629298284671)
(0.62,0.513934700471096)
(0.63,0.512232030535226)
(0.64,0.510520271642956)
(0.65,0.508798393278044)
(0.66,0.507065349749677)
(0.67,0.505320078754026)
(0.68,0.503561499988837)
(0.69,0.501788513820122)
(0.7,0.5)
};
\draw [ultra thick, dashed] (axis cs:0.5,0.5)  -- (axis cs:0.7,0.5);
\legend{}
\end{axis}
\end{tikzpicture}

\vspace{0.25cm}
\begin{tikzpicture}
\begin{axis}[
name = axis3,
title = {$p = 0.75$},
xlabel = Value of $q$,
width = 0.5\textwidth,
height = 0.4\textwidth,
legend style = {at={(-0.1,-0.25)},anchor=north west,legend columns=3},
ymajorgrids = true,
xmin = 0.5,
xmax = 0.75,
ymin = 0.498,
ymax = 0.555,
]      

\addlegendentry{Catch-Up Rule$\quad$}
\addplot [blue, dotted, thick] coordinates {
(0.5,0.5440673828125)
(0.51,0.542229106189265)
(0.52,0.540401111523179)
(0.53,0.538582659882039)
(0.54,0.536773033597861)
(0.55,0.534971539606812)
(0.56,0.533177513137254)
(0.57,0.531390321737962)
(0.58,0.529609369638798)
(0.59,0.527834102436286)
(0.6,0.526064012096773)
(0.61,0.524298642270021)
(0.62,0.52253759390625)
(0.63,0.520780531169877)
(0.64,0.519027187643311)
(0.65,0.517277372814359)
(0.66,0.51553097884098)
(0.67,0.513787987587242)
(0.68,0.512048477924528)
(0.69,0.510312633292147)
(0.7,0.508580749511717)
(0.71,0.506853242849736)
(0.72,0.50513065832298)
(0.73,0.503413678241461)
(0.74,0.501703130983796)
(0.75,0.5)
};
\addlegendentry{Adjusted Catch-Up Rule$\quad$}
\addplot [red] coordinates {
(0.5,0.52850341796875)
(0.51,0.527203045850966)
(0.52,0.525926275986755)
(0.53,0.524671662000348)
(0.54,0.523437785310854)
(0.55,0.522223252993723)
(0.56,0.52102669572549)
(0.57,0.519846765809916)
(0.58,0.518682135283684)
(0.59,0.517531494099818)
(0.6,0.516393548387096)
(0.61,0.515267018783749)
(0.62,0.51415063884375)
(0.63,0.513043153514101)
(0.64,0.511943317681527)
(0.65,0.510849894787015)
(0.66,0.509761655506724)
(0.67,0.50867737649778)
(0.68,0.507595839207547)
(0.69,0.506515828744942)
(0.7,0.505436132812499)
(0.71,0.504355540697794)
(0.72,0.50327284232298)
(0.73,0.502186827351162)
(0.74,0.50109628434838)
(0.75,0.5)
};
\addlegendentry{Alternating ($ABBA$) Rule}
\addplot [ForestGreen, loosely dashed, thick] coordinates {
(0.5,0.543416341145833)
(0.51,0.5417262037701)
(0.52,0.540040225255628)
(0.53,0.538357348243028)
(0.54,0.536676526840624)
(0.55,0.534996723872949)
(0.56,0.533316908264052)
(0.57,0.531636052552565)
(0.58,0.529953130535551)
(0.59,0.528267115038188)
(0.6,0.526576975806451)
(0.61,0.524881677520056)
(0.62,0.523180177922916)
(0.63,0.521471426068529)
(0.64,0.519754360677706)
(0.65,0.51802790860615)
(0.66,0.516290983419461)
(0.67,0.514542484073185)
(0.68,0.512781293695597)
(0.69,0.511006278470963)
(0.7,0.509216286621093)
(0.71,0.507410147483021)
(0.72,0.505586670680744)
(0.73,0.503744645388969)
(0.74,0.501882839686883)
(0.75,0.5)
};
\draw [ultra thick, dashed] (axis cs:0.5,0.5)  -- (axis cs:0.75,0.5);
\end{axis}

\begin{axis}[
at = {(axis3.south east)},
xshift = 0.1\textwidth,
title = {$p = 0.8$},
xlabel = Value of $q$,
width = 0.5\textwidth,
height = 0.4\textwidth,
legend style = {at={(0,-0.2)},anchor=north west,legend columns=3},
ymajorgrids = true,
xmin = 0.5,
xmax = 0.8,
ymin = 0.498,
ymax = 0.555,
]      

\addlegendentry{Catch-Up Rule$\quad$}
\addplot [blue, dotted, thick] coordinates {
(0.5,0.554339999999999)
(0.51,0.55245345119556)
(0.52,0.550578258939448)
(0.53,0.548713514201494)
(0.54,0.546858322367039)
(0.55,0.545011805490196)
(0.56,0.543173105082879)
(0.57,0.541341385425044)
(0.58,0.539515837381973)
(0.59,0.537695682714915)
(0.6,0.535880178871795)
(0.61,0.534068624245116)
(0.62,0.532260363884574)
(0.63,0.530454795652269)
(0.64,0.528651376808805)
(0.65,0.526849631018867)
(0.66,0.525049155765266)
(0.67,0.523249630160701)
(0.68,0.521450823146908)
(0.69,0.51965260207105)
(0.7,0.517854941629629)
(0.71,0.516057933170377)
(0.72,0.514261794342951)
(0.73,0.51246687908946)
(0.74,0.510673687966158)
(0.75,0.508882878787878)
(0.76,0.507095277586996)
(0.77,0.505311889879008)
(0.78,0.503533912226981)
(0.79,0.50176274409738)
(0.8,0.5)
};
\addlegendentry{Adjusted Catch-Up Rule$\quad$}
\addplot [red] coordinates {
(0.5,0.537739999999999)
(0.51,0.536289210453938)
(0.52,0.534863879573048)
(0.53,0.533462588107087)
(0.54,0.532083946306869)
(0.55,0.530726590784314)
(0.56,0.529389181575679)
(0.57,0.528070399402457)
(0.58,0.526768943124559)
(0.59,0.525483527380601)
(0.6,0.524212880410256)
(0.61,0.522955742053779)
(0.62,0.521710861923976)
(0.63,0.52047699774603)
(0.64,0.519252913860732)
(0.65,0.518037379886791)
(0.66,0.516829169538066)
(0.67,0.515627059591605)
(0.68,0.514429829002621)
(0.69,0.513236258162518)
(0.7,0.512045128296295)
(0.71,0.510855220995723)
(0.72,0.509665317884786)
(0.73,0.508474200414014)
(0.74,0.507280649780395)
(0.75,0.506083446969696)
(0.76,0.504881372918057)
(0.77,0.503673208789864)
(0.78,0.502457736368978)
(0.79,0.501233738560451)
(0.8,0.5)
};
\addlegendentry{Alternating ($ABBA$) Rule}
\addplot [ForestGreen, loosely dashed, thick] coordinates {
(0.5,0.553459999999999)
(0.51,0.551743607981312)
(0.52,0.550033322536201)
(0.53,0.548327975973656)
(0.54,0.546626419247598)
(0.55,0.544927518169935)
(0.56,0.543230149890559)
(0.57,0.541533199637049)
(0.58,0.539835557706987)
(0.59,0.53813611670611)
(0.6,0.536433769025641)
(0.61,0.534727404552404)
(0.62,0.533015908605498)
(0.63,0.531298160093499)
(0.64,0.529573029886344)
(0.65,0.527839379396225)
(0.66,0.526096059361998)
(0.67,0.524341908831759)
(0.68,0.522575754338425)
(0.69,0.520796409263299)
(0.7,0.519002673382715)
(0.71,0.517193332593083)
(0.72,0.515367158809699)
(0.73,0.513522910034894)
(0.74,0.511659330591167)
(0.75,0.509775151515151)
(0.76,0.50786909110826)
(0.77,0.505939855640135)
(0.78,0.503986140200985)
(0.79,0.502006629699113)
(0.8,0.5)
};
\draw [ultra thick, dashed] (axis cs:0.5,0.5)  -- (axis cs:0.8,0.5);
\legend{}
\end{axis}
\end{tikzpicture}

\end{figure}


Until now, we have investigated only the case of $p = 3/4$ and $q = 2/3$.
Figure~\ref{Fig1} plots the winning probabilities of team $A$ using the presented rules for different values of $p$ as a function of $q$, where $0.5 \leq q \leq p$ since the penalties in soccer are usually successful. It shows that the order of these designs with respect to fairness is not influenced by the particular parameters chosen: the Catch-Up and the Alternating ($ABBA$) Rules remain almost indistinguishable, and the Adjusted Catch-Up Rule turns out to be the best as before. Furthermore, all mechanisms are fairer if $p$ is closer to $q$, according to our intuition.

Unfortunately, there is no hope to analytically derive conditions for $p$ and $q$ which make the Adjusted Catch-Up Rule fairer compared to the other designs even in this simple mathematical model. The reason is that the five rounds of penalties imply $2^{10} = 1024$ different cases, and the probability of each is given by a formula containing the product of ten items from the set of $p$, $q$, $(1-p)$, and $(1-q)$.
Nevertheless, Figure~\ref{Fig1} supports this conjecture by reinforcing the lack of non-linear effects.

\subsection{Fairness: empirical round dependent scoring probabilities} \label{Sec32}

\begin{figure}[ht!]
\centering
\caption{The empirical probability that $A$ wins a penalty \\ shootout over five rounds including sudden death}
\label{Fig2}

\begin{tikzpicture}
\begin{axis}[width=\textwidth, 
height=0.5\textwidth,
tick label style={/pgf/number format/fixed},
symbolic x coords={($2/3;3/5$),($3/4;2/3$),($3/4;3/5$)},
xtick = data,
xlabel = Probabilities of scoring in the sudden death ($p;q$),
enlarge x limits={abs=2cm},
ybar,
ymin = 0.49,
ymax = 0.54,
ymajorgrids = true,
bar width = 1cm,
ybar = 0.25cm,
legend entries = {Catch-Up Rule$\quad$,Adjusted Catch-Up Rule$\quad$,Alternating ($ABBA$) Rule},
legend style = {at={(0.5,-0.25)},anchor = north,legend columns = 3}
]

\addplot [blue, pattern color = blue, pattern = dots, very thick] coordinates{
(($2/3;3/5$),0.52794530463813)
(($3/4;2/3$),0.527520306595316)
(($3/4;3/5$),0.525470719259806)
};

\addplot [red, pattern color = red, pattern = grid, very thick] coordinates{
(($2/3;3/5$),0.523678538465)
(($3/4;2/3$),0.522355273859421)
(($3/4;3/5$),0.515973723584129)
};

\addplot [ForestGreen, pattern color = ForestGreen, pattern = horizontal lines, very thick] coordinates{
(($2/3;3/5$),0.538255247188209)
(($3/4;2/3$),0.536959358460168)
(($3/4;3/5$),0.530709830562039)
};

\draw [ultra thick, dashed] (rel axis cs:0,0.2)  -- (rel axis cs:1,0.2);
\end{axis}
\end{tikzpicture}
\end{figure}

The three rules can also be compared in the view of the empirical round dependent probabilities from Table~\ref{Table2}. Since success rates in the sudden death stage are uncertain due to the small sample size, it is assumed that our former mathematical model holds after five rounds with the fixed probabilities $p$ and $q$.
Figure~\ref{Fig2} presents the results of these calculations. While the Catch-Up Rule is closer to fairness based on the empirical data than the Alternating ($ABBA$) Rule, the Adjusted Catch-Up Rule remains the winner.

We have attempted to determine the scoring probabilities $p$ and $q ( \leq p)$ in the sudden death stage which make the Adjusted Catch-Up Rule fairer than the other two mechanisms. Formally, suppose that the following values are known:
\begin{itemize}
\item
$P^5(A)$: the probability that $A$ wins a penalty shootout over five rounds without sudden death under the Catch-Up Rule;
\item
$P^5_A(T)$: the probability that a penalty shootout over five rounds is tied under the Catch-Up Rule and $A$ kicks the sixth penalty according to the Catch-Up Rule;
\item
$P^5_B(T)$: the probability that a penalty shootout over five rounds is tied under the Catch-Up Rule and $B$ kicks the sixth penalty according to the Catch-Up Rule.
\end{itemize}
Furthermore, denote by $\alpha$ the probability of winning the sudden death by the team that kicks first in this stage. Formula \eqref{eq1} implies $0.5 \leq \alpha$ because of the assumption $q \leq p$ to incorporate the psychological effect, which is probably even stronger in the sudden death.

Then the overall probability of winning for team $A$ under the Catch-Up Rule is
\begin{equation} \label{eq2}
P^5(A) + P^5_A(T) \times \alpha + P^5_B(T) \times \left( 1 - \alpha \right),
\end{equation}
while the overall probability of winning for team $A$ under the Adjusted Catch-Up Rule is
\begin{equation} \label{eq3}
P^5(A) + \left( P^5_A(T) + P^5_B(T) \right) \times \left( 1 - \alpha \right).
\end{equation}
The Adjusted Catch-Up Rule is fairer than the Catch-Up Rule if the value of \eqref{eq3} is closer to $0.5$ than the value of \eqref{eq2}.
By using the round dependent empirical scoring probabilities of Table~\ref{Table2}, this results in $0.5 \leq \alpha \leq \alpha(CU) \approx 0.6569$. Thus the Adjusted Catch-Up Rule becomes fairer than the Catch-Up Rule if
\[
0.5 \leq \frac{1-q+pq}{2-p-q+2pq} \leq \alpha(CU) \iff \frac{1 - 2 \alpha(CU) + \alpha(CU) p}{1 - p - \alpha + 2 \alpha(CU) p} \leq q \leq p.
\]
An analogous calculation leads to the conclusion that the Adjusted Catch-Up Rule is fairer than the $ABBA$ Rule if $0.5 \leq \alpha \leq \alpha(ABBA) \approx 0.6252$.

\begin{figure}[ht!]
\centering
\caption{The fixed scoring probabilities in sudden death which guarantee that \\ the Adjusted Catch-Up Rule is fairer than the other penalty shootout designs}
\label{Fig3}

\begin{tikzpicture}
\begin{axis}[
xlabel = Value of $p$,
ylabel = Value of $q$,
width = 0.95\textwidth,
height = 0.5\textwidth,
legend style = {at={(0.2,-0.15)},anchor=north west,legend columns=6},
ymajorgrids = true,
max space between ticks=50,
xmin = 0.5,
xmax = 1,
ymin = -0.025,
ymax = 1,
]      

\addplot [name path=A, blue, very thick] coordinates {
(0.5,0.0293548629655568)
(0.51,0.0422275082602959)
(0.52,0.0549405974588224)
(0.53,0.0674970788296377)
(0.54,0.079899828448649)
(0.55,0.0921516523953994)
(0.56,0.104255288869607)
(0.57,0.116213410231363)
(0.58,0.128028624968195)
(0.59,0.139703479592011)
(0.6,0.151240460468851)
(0.61,0.162641995584189)
(0.62,0.173910456246421)
(0.63,0.185048158731057)
(0.64,0.196057365868001)
(0.65,0.20694028857421)
(0.66,0.217699087333915)
(0.67,0.228335873628476)
(0.68,0.238852711317871)
(0.69,0.249251617975701)
(0.7,0.259534566179549)
(0.71,0.269703484758399)
(0.72,0.279760259998791)
(0.73,0.28970673681129)
(0.74,0.299544719858785)
(0.75,0.309275974648064)
(0.76,0.318902228586056)
(0.77,0.328425172002066)
(0.78,0.337846459137274)
(0.79,0.347167709102712)
(0.8,0.356390506806885)
(0.81,0.36551640385416)
(0.82,0.374546919414978)
(0.83,0.383483541068926)
(0.84,0.392327725621647)
(0.85,0.401080899896533)
(0.86,0.409744461502099)
(0.87,0.41831977957591)
(0.88,0.426808195505893)
(0.89,0.435211023629828)
(0.9,0.443529551913787)
(0.91,0.451765042610256)
(0.92,0.459918732896645)
(0.93,0.467991835494873)
(0.94,0.475985539272656)
(0.95,0.483901009827163)
(0.96,0.491739390051598)
(0.97,0.49950180068532)
(0.98,0.507189340848043)
(0.99,0.514803088558653)
(1,0.522344101239149)
};

\addplot [name path=B, ForestGreen, very thick] coordinates {
(0.5,0.124513935959349)
(0.51,0.136334619722923)
(0.52,0.148038116818235)
(0.53,0.159626161279491)
(0.54,0.171100453097021)
(0.55,0.182462659048665)
(0.56,0.193714413506922)
(0.57,0.204857319222661)
(0.58,0.215892948086207)
(0.59,0.226822841866549)
(0.6,0.237648512929396)
(0.61,0.248371444934794)
(0.62,0.258993093514977)
(0.63,0.269514886933094)
(0.64,0.279938226723453)
(0.65,0.290264488313864)
(0.66,0.300495021630689)
(0.67,0.310631151687129)
(0.68,0.320674179155308)
(0.69,0.330625380922667)
(0.7,0.340486010633156)
(0.71,0.350257299213724)
(0.72,0.359940455386561)
(0.73,0.369536666167542)
(0.74,0.379047097351301)
(0.75,0.388472893983359)
(0.76,0.397815180819703)
(0.77,0.407075062774199)
(0.78,0.416253625354224)
(0.79,0.425351935084866)
(0.8,0.434371039922049)
(0.81,0.443311969654913)
(0.82,0.452175736297776)
(0.83,0.460963334471996)
(0.84,0.469675741778023)
(0.85,0.478313919157945)
(0.86,0.486878811248813)
(0.87,0.495371346726998)
(0.88,0.503792438643868)
(0.89,0.512142984753026)
(0.9,0.520423867829358)
(0.91,0.528635955980125)
(0.92,0.536780102948342)
(0.93,0.544857148408652)
(0.94,0.552867918255918)
(0.95,0.560813224886743)
(0.96,0.56869386747411)
(0.97,0.576510632235352)
(0.98,0.584264292693622)
(0.99,0.591955609933068)
(1,0.599585332847866)
};

\addplot [name path=C, black, very thick] coordinates {
(0.5,0.5)
(0.51,0.51)
(0.52,0.52)
(0.53,0.53)
(0.54,0.54)
(0.55,0.55)
(0.56,0.56)
(0.57,0.57)
(0.58,0.58)
(0.59,0.59)
(0.6,0.6)
(0.61,0.61)
(0.62,0.62)
(0.63,0.63)
(0.64,0.64)
(0.65,0.65)
(0.66,0.66)
(0.67,0.67)
(0.68,0.68)
(0.69,0.69)
(0.7,0.7)
(0.71,0.71)
(0.72,0.72)
(0.73,0.73)
(0.74,0.74)
(0.75,0.75)
(0.76,0.76)
(0.77,0.77)
(0.78,0.78)
(0.79,0.79)
(0.8,0.8)
(0.81,0.81)
(0.82,0.82)
(0.83,0.83)
(0.84,0.84)
(0.85,0.85)
(0.86,0.86)
(0.87,0.87)
(0.88,0.88)
(0.89,0.89)
(0.9,0.9)
(0.91,0.91)
(0.92,0.92)
(0.93,0.93)
(0.94,0.94)
(0.95,0.95)
(0.96,0.96)
(0.97,0.97)
(0.98,0.98)
(0.99,0.99)
(1,1)
};
\addplot[pattern = vertical lines, pattern color = blue] fill between[of = A and C];
\addplot[pattern = horizontal lines, pattern color = ForestGreen] fill between[of = B and C];
\end{axis}
\end{tikzpicture}

\vspace{-0.4cm}
\begin{center}
\begin{tikzpicture}
	\begin{customlegend}[legend columns=1,legend entries={Region where it is fairer than the Catch-Up Rule, Region where it is fairer than the Alternating ($ABBA$) Rule},
	legend image code/.code={\draw (0cm,-0.1cm) rectangle (0.6cm,0.2cm);}]
        \addlegendimage{color = blue, pattern = vertical lines, pattern color = blue, very thick}
        \addlegendimage{color = ForestGreen, pattern = horizontal lines, pattern color = ForestGreen, very thick} 
	\end{customlegend}
\end{tikzpicture}
\end{center}
\vspace{-0.5cm}
\end{figure}



The range of values $(p;q)$, $q \leq p$ for the scoring probabilities in sudden death that makes the Adjusted Catch-Up Rule fairer than the other two penalty shootout designs with the empirical results from Table~\ref{Table2} are plotted in Figure~\ref{Fig3}. Our proposal outperforms the Catch-Up Rule in the region indicated by the blue vertical lines, while it is preferred to the Alternating ($ABBA$) Rule in the region indicated by the green horizontal lines (the latter is a subset of the former). Since any reasonable value of $q$ lies between these bounds, the Adjusted Catch-Up Rule is the closest to fairness among the three designs with the empirical round dependent success rates of \citet{ApesteguiaPalacios-Huerta2010}.

\subsection{Beyond fairness: expected length and strategy-proofness} \label{Sec33}

\begin{figure}[ht!]
\centering
\caption{The probability that a penalty shootout over five rounds goes to sudden death}
\label{Fig4}

\begin{tikzpicture}
\begin{axis}[width=\textwidth, 
height=0.5\textwidth,
tick label style={/pgf/number format/fixed},
symbolic x coords={($2/3;3/5$),($3/4;2/3$),($3/4;3/5$),Empirical},
xtick = data,
xlabel = Values of ($p;q$),
enlarge x limits={abs=1.5cm},
ybar,
ymin = 0,
ymajorgrids = true,
bar width = 1cm,
ybar = 0.25cm,
legend entries = {(Adjusted) Catch-Up Rule$\quad$,Alternating ($ABBA$) Rule},
legend style = {at={(0.5,-0.25)},anchor = north,legend columns = 6}
]

\addplot [blue, pattern color = blue, pattern = dots, very thick] coordinates{
(($2/3;3/5$),0.264607078189)
(($3/4;2/3$),0.283733603395)
(($3/4;3/5$),0.2809675)
(Empirical,0.289133316319)
};

\addplot [ForestGreen, pattern color = ForestGreen, pattern = horizontal lines, very thick] coordinates{
(($2/3;3/5$),0.256832263375)
(($3/4;2/3$),0.274731545782)
(($3/4;3/5$),0.266798125)
(Empirical,0.2831516870768)
};
\end{axis}
\end{tikzpicture}
\end{figure}


In the model above, the expected length of the sudden death stage is $1 / (p + q -2pq)$, the same for all mechanisms \citep{BramsIsmail2018}. The Catch-Up and Adjusted Catch-Up Rules differ only in which team kicks the first penalty of the sudden death.
However, the probability of reaching this stage is greater with the (Adjusted) Catch-Up Rule than with the Alternating ($ABBA$) Rule as Figure~\ref{Fig4} illustrates based on some particular values of $p$ and $q$, as well as the empirical round dependent success rates given in Table~\ref{Table2}. Consequently, the former mechanisms can make the penalty shootout somewhat more exciting.

It has been presented recently that certain sports rules do not satisfy incentive compatibility, that is, a team might be strictly better off by exerting a lower effort \citep{Csato2018b, Csato2019b, DagaevSonin2018, KendallLenten2017, Vong2017}.
The Alternating ($ABBA$) Rule is not vulnerable to any kind of strategic manipulation since neither team can influence the order of shooting. According to \citet{BramsIsmail2018}, no team is interested in missing a kick under the Catch-Up Rule if $(p-q) \leq 1/2$, which seems likely to be met in practice. The Adjusted Catch-Up Rule offers fewer opportunities to change the order of the penalties since the first-mover in sudden death is fixed, therefore it also satisfies strategy-proofness if the condition $(p-q) \leq 1/2$ holds.

\section{The complexity of penalty shootout designs} \label{Sec4}

Since the IFAB has stopped the trials of the Alternating ($ABBA$) Rule due to its complexity, this should be another important feature of mechanisms for penalty shootouts.
The first attempt to quantify their simplicity has been provided in \citet{AnbarciSunUnver2015}, and has been repeated in \citet{AnbarciSunUnver2019}. They call a rule simple if it has a stationary machine representation with only two states such that in one state team $A$ kicks first and in the other team $B$ kicks first. However, this measure is not consistent with the decision of the IFAB \citep{FIFA2018b} because it judges the Standard ($ABAB$) and the Alternating ($ABBA$) Rules to have the same level of complexity.

\citet{RudiOlivaresShatty2019} suggest another measure of simplicity but they choose complexity levels somewhat arbitrarily and their approach is not able to classify stochastic mechanisms (such as the Catch-Up Rule), which depend on the outcome of previous penalties.

Thus we provide a procedure that quantifies the complexity of any penalty shootout design, remains intuitive, and is consistent with the recent decision of the IFAB.

\begin{definition}  \label{Def21}
\emph{Complexity}:
Suppose that the \emph{mathematician} should report the \emph{referee} on which team is the first-mover in the next round of a penalty shootout. The mathematician has initially no information but she can ask binary questions on the history of the shootout including the number of the next round.
The \emph{complexity} of any penalty shootout mechanism is the minimal number of questions needed to determine the first kicking team in a given round, taking into account that the questions and their number might depend on the answer(s) to the preceding question(s).
\end{definition}

In other words, there is an information asymmetry between the mathematician and the referee as the former knows only the rules, while the latter knows only the history of the shootout.

Definition~\ref{Def21} can be applied to reveal the simplicity of a penalty shootout mechanism.

\begin{proposition} \label{Prop21}
The complexities of some penalty shootout designs are as follows:
\begin{itemize}
\item
Standard $(ABAB$) Rule: $0$;
\item
Alternating $(ABBA$) Rule: $1$;
\item
Catch-Up Rule: $2$;
\item
Adjusted Catch-Up Rule: between $2$ and $3$.
\end{itemize}
\end{proposition}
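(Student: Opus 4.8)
The plan is to handle all four mechanisms in one framework. For each rule I first express the first-mover of round $n$ as a function of the recorded history (the round index together with, for every completed round, its first-mover and which kicks were converted), and then exhibit an adaptive decision tree whose nodes are admissible binary questions and whose worst-case depth over all reachable histories realises the claimed value. The two deterministic rules are immediate. Under the Standard $(ABAB)$ Rule team $A$ starts every round, so the first-mover is the constant $A$ and no question is needed, giving $0$. Under the Alternating $(ABBA)$ Rule the first-movers form the sequence $A,B,A,B,\dots$, so the first-mover of round $n$ is $A$ for odd $n$ and $B$ for even $n$; the single question ``is the upcoming round odd?'' settles it, and since the first-mover is not constant (it is $A$ in round $1$ and $B$ in round $2$) no zero-question strategy is always correct. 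Hence the value is exactly $1$.

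For the Catch-Up Rule I would first record the one-round transition stated in Section~\ref{Sec22}: the order is mirrored unless the previous first-mover failed while the previous second-mover scored. Writing $s_A,s_B\in\{0,1\}$ for the two conversions of round $n-1$ and $f_{n-1}$ for its first-mover, this says that after a decisive round the loser kicks first ($f_n=B$ if $(s_A,s_B)=(1,0)$, and $f_n=A$ if $(s_A,s_B)=(0,1)$), while after a tied round ($s_A=s_B$) the order swaps, so $f_n=\overline{f_{n-1}}$. The upper bound $\le 2$ then follows from the two questions $Q_1=$``did the previous first-mover fail and the previous second-mover score?'' and $Q_2=$``did $A$ kick first in the previous round?'': a \emph{yes} to $Q_1$ gives $f_n=f_{n-1}$ and a \emph{no} gives $f_n=\overline{f_{n-1}}$, while $Q_2$ fixes $f_{n-1}$. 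For the lower bound I note that, since the mathematician starts with no information, she cannot assume a particular round, and over genuine play all four local configurations occur with $f_n$ following the parity pattern $(A,\text{no-swap})\mapsto A$, $(A,\text{swap})\mapsto B$, $(B,\text{no-swap})\mapsto B$, $(B,\text{swap})\mapsto A$. Thus $f_n$ is the exclusive-or of two independent recorded features (the previous first-mover and the swap indicator); no single elementary question can constrain both, so the value is exactly $2$.

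For the Adjusted Catch-Up Rule I would use that it equals Catch-Up in rounds $1$--$5$ and, from round $6$ on, coincides with the Alternating $(ABBA)$ Rule pinned so that $B$ opens the sudden death; hence for $n\ge 6$ the first-mover is $B$ when $n$ is even and $A$ when $n$ is odd. Because the round is unknown a priori, the natural tree first asks $Q_0=$``is the upcoming round in the sudden-death stage ($n\ge 6$)?''. A \emph{yes} reduces to the single parity question and resolves the first-mover in $2$ questions; a \emph{no} reduces to the Catch-Up sub-procedure and resolves it in $1+2=3$. This yields the upper bound $\le 3$. The lower bound $\ge 2$ is inherited from the Catch-Up behaviour in rounds $2$--$5$, where the same exclusive-or obstruction applies, and equally from the sudden-death rounds, where one must learn both that the stage is reached and the parity. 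Since the sudden-death rounds cost $2$ while the pre-sudden-death rounds cost $3$, the worst-case complexity lies between $2$ and $3$, as stated.

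The hard part will be the lower bounds, and specifically pinning down what counts as an \emph{admissible} question, because the intended value for Catch-Up is delicately poised: under a literal single-bit-probe model one checks that $f_n=g(f_{n-1},s_A,s_B)$ has decision-tree depth $3$, whereas if one permitted an arbitrary compound query about round $n-1$ one could ask the rule-derived predicate ``is the round-$n$ first-mover $A$?'' and collapse every complexity to $0$ or $1$. I would therefore fix the operational convention implicit in the definition: a question may report one recorded datum---the first-mover of one nominated round, the decisive outcome of one nominated round, or one bit of the round index---but may not evaluate the rule on the not-yet-played round. Under this reading the two upper-bound questions above are admissible (each reports a single recorded event) while the exclusive-or predicate is not, so the argument gives precisely $2$ for Catch-Up and the range $[2,3]$ for the Adjusted rule. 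Two boundary issues---round $1$, which has no predecessor and trivially returns $A$, and the exact worst-case value for the Adjusted rule---I would dispatch separately, as neither affects the stated figures.
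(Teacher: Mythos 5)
Your proposal is correct and follows essentially the same route as the paper: the same question trees (nothing for $ABAB$, parity for $ABBA$, ``previous first-mover'' plus ``did the first fail and the second score'' for Catch-Up, and a stage query followed by the appropriate sub-procedure for the Adjusted rule). You go beyond the paper's proof, which only exhibits these upper-bound constructions, by supplying lower bounds and by pinning down the admissibility convention under which the compound ``fail-and-score'' event counts as a single binary question --- a point the paper leaves implicit but which is indeed needed, since a literal single-bit-probe model would give Catch-Up depth $3$ and an unrestricted model would collapse everything to at most $1$.
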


\begin{proof}
According to the Standard ($ABAB$) Rule, team $A$ will be the first-mover in the next round of penalties, which is known without asking any question. \\
The Alternating $(ABBA$) Rule requires the knowledge of the parity (odd: team $A$, even: team $B$) of the next round's number. \\
The Catch-Up Rule can be implemented by asking two questions because it depends on the first kicker in the previous round and on the fact whether the first kicker has failed but the second has scored in the previous round or not. \\
The Adjusted Catch-Up Rule first requires the knowledge of whether the sudden death stage is reached or not. After that, either the Alternating $(ABBA$) Rule (one question) or the Catch-Up Rule (two questions) is applied.
\end{proof}

Our approach seems to provide reasonable estimates of simplicity.
For example, the design consisting of three rounds of $ABBA$ followed by Catch-Up is between $2$ and $3$: first, the mathematician should know whether the next round is one of the first three or not, and then the appropriate design can be implemented with further one ($ABBA$) or two (Catch-Up) questions.
However, the Adjusted Catch-Up Rule is probably simpler than this artificial mechanism because changing the doctrine at the beginning of the sudden death stage can be considered less costly compared to changing the doctrine after three rounds as the rule of aggregation is modified in the sudden death anyway. Hence the Adjusted Catch-Up Rule can be judged only marginally more complex than the Catch-Up Rule.


The application of a more complex mechanism remains questionable unless it yields meaningful gains in fairness and other aspects. The Catch-Up Rule does not seem to be fairer than the Alternating ($ABBA$) Rule based on Table~\ref{Table3} and Figure~\ref{Fig1}, which reduces the significance of \citet{BramsIsmail2018}'s proposal. On the other hand, our Adjusted Catch-Up Rule dominates both of them except for a small increase in complexity.

\section{Conclusions} \label{Sec5}

Tournament organizers supposedly want to guarantee fairness. However, the standard penalty shootout mechanism in soccer contains a well-known bias favoring the first shooter. This means a problem because an order of actions that provides an ex-post advantage to one team may harm efficiency by decreasing the probability of the stronger team to win.
Consequently, there is little excuse to continue the use of the current rule.

We have demonstrated by a mathematical model that the recently suggested Catch-Up Rule is not worth implementing since it is not fairer than the less complex Alternating ($ABBA$) Rule already tried. On the other hand, the Adjusted Catch-Up Rule can be considered as a promising candidate to make penalty shootouts fairer and even more exciting. Finally, the proposed quantification of complexity permits a two-dimensional evaluation of any mechanism recommended in the future.




\section*{Acknowledgments}
\addcontentsline{toc}{section}{Acknowledgements}
\noindent
This paper could not be prepared without \emph{my father} (also called \emph{L\'aszl\'o Csat\'o}), who has written the code making the necessary computations in Python essentially during a weekend. \\
We are deeply indebted to \emph{Steven J. Brams} and \emph{Mehmet S. Ismail}, whose work was a great source of inspiration. \\
We would like to thank \href{http://www.palacios-huerta.com/}{\emph{Ignacio Palacios-Huerta}} for useful information on penalty shootouts and \href{https://sites.google.com/view/doragretapetroczy}{\emph{D\'ora Gr\'eta Petr\'oczy}} for her beneficial help. \\
Seven anonymous reviewers provided valuable comments and suggestions on earlier drafts. \\
We are grateful to the \href{https://en.wikipedia.org/wiki/Wikipedia_community}{Wikipedia community} for contributing to our research by collecting and structuring invaluable information on the sports tournaments discussed. \\
The research was supported by the MTA Premium Postdoctoral Research Program under grant PPD2019-9/2019.

\bibliographystyle{apalike}
\bibliography{All_references}

\end{document}